\newtheorem{thm}{Theorem}[section]
\theoremstyle{definition}
\theoremstyle{remark}
\theoremstyle{plain}
\newtheorem{lem}[thm]{Lemma}
\newtheorem{col}[thm]{Corollary}
\title{Sleeping Model: Local and Dynamic Algorithms}
\author{Tzalik Maimon}{Ben-Gurion University of The Negev, Israel}{tzalik@post.bgu.ac.il}{}{}
\authorrunning{T. Maimon}
\keywords{Distributed Computing, Sleeping model, Bounded Neighborhood Independence}
\begin{document}
\nolinenumbers
\maketitle

\begin{abstract}
The distributed setting represents a communication network where each node is a processor with its own memory, and each edge is a direct communication line between two nodes. Algorithms are usually measured by the number of synchronized communication rounds required until the algorithm terminates in all the nodes of the input graph. In this work we refer to this as the {\em clock round complexity}. The setting is studied under several models, more known among them are the $\mathcal{LOCAL}$ model and the $\mathcal{CONGEST}$ model. In recent years the {\em sleeping model} (or some variation thereof) came to the focus of researchers \cite{BT19, CGP20, DHW98, F20, GKKPS08, KPSY11, BM21}. In this model nodes can go into a sleep state in which they spend no energy but at the same time cannot receive or send messages, nor can they perform internal computations. This model captures energy considerations of a problem. In \cite{BM21} Barenboim and Maimon defined the class of {{\bf O-LOCAL}\footnote{A problem $P$ is an {\bf O-LOCAL} problem if, given an acyclic orientation on the edges of the input graph, one can solve the problem as follows. Each vertex awaits the decisions of its parents according to the given orientation and can make its own decision in regard to $P$ using only the information about its parents decisions.}} problems and showed that for this class of problems there is a deterministic algorithm that runs in $O(\log \Delta)$ awake time. The clock round complexity of that algorithm is $O(\Delta^2)$. Well-studied {\bf O-LOCAL} problems include coloring and Maximal Independence Set (MIS). 

In this work we present several deterministic results in the sleeping model:
\begin{enumerate}
    \item We offer three algorithms for the {\bf O-LOCAL} class of problems with a trade off between awake complexity and clock round complexity. One of these algorithms requires only $O(\Delta^{1+\epsilon})$ clock rounds for some constant $\epsilon>0$ but still only $O(\log \Delta)$ awake time which improves on the algorithm in \cite{BM21}. We add to this two other algorithms that trade a higher awake complexity for lower clock round complexity. We note that the awake time incurred is not that significant.
    \item We offer dynamic algorithms in the sleeping model. We show three algorithms for solving dynamic problems in the {\bf O-LOCAL} class as well as an algorithm for solving any dynamic decidable problem.
    \item We show that one can solve any {\bf O-LOCAL} problem in constant awake time in graphs with constant neighborhood independence. Specifically, our algorithm requires $O(K)$ awake time where $K$ is the neighborhood independence of the input graph. Graphs with bounded neighborhood independence are well studied with several results in recent years for several core problem in the distributed setting \cite{BE13,MS20,BM18, BM19, BM20}.
\end{enumerate}

\end{abstract}

\section{Introduction}

In the distributed setting there are several problems which take a central place in the attention of researchers. Among those are the Maximal Independent Set (MIS), Maximal Matching and graph coloring. These problems have been studied in various distributed models like the $\mathcal{CONGEST}$ model, the congested clique and, recently, the sleeping model. The sleeping model is a distributed model where network nodes can enter a sleeping state. As long as a vertex is in the awake state, it can do whatever it can in the standard message-passing setting. However, when a vertex is asleep it cannot receive or send messages in the network nor can it perform internal computations. On the other hand, sleeping rounds do not count towards the complexity in the model which is aptly named {\em awake complexity.} This complexity is expressed by the number of awake rounds a vertex spends during an execution. In this paper we offer several results in the sleeping model. First, we offer three algorithms with different awake time complexity and clock round complexity for solving any {\bf O-LOCAL} problem. This class of problems is consisted of problems that, given an acyclic orientation, can be solved as follows. Each vertex awaits all its parents (neighbors on outgoing edges) to calculate their solution for $P$ and then decides on its own solution for the problem. The three algorithms are summarized in Table \ref{tab:OLOCAL}.

\begin{table}[H]
\begin{center}
  \begin{tabular}{|l|l|l|}
    \hline
        {\bf Method} & {\bf Awake Time} & {\bf Clock Rounds} \\
    \hline
        Theorem \ref{thm: BEvar} & $O(\log^2 \Delta + \log^*\Delta \log^*n)$ & $O(\Delta + \log^*\Delta \log^*n)$ \\
    \hline
        Theorem \ref{thm:KWsub} & $O(\log^2 \Delta + \log^*n)$ & $O(\Delta \log \Delta + \log^* n)$ \\
    \hline
        Theorem \ref{thm:KWdebts} & $O(\log \Delta + \log^*n)$ & $O(\Delta^{1+\epsilon} + \log^* n)$ \\
    \hline
  \end{tabular}
  \caption{:  {\bf O-LOCAL} Algorithms: the trade-off between awake complexity and clock round complexity.}  \label{tab:OLOCAL}
\end{center}
\end{table}

Under this model one can also consider dynamic algorithms in which the task is to update a given solution for a problem after an {\em update phase} in which changes are made to the input graph. These changes can be expressed as updates to the vertex set or the edge set of the graph. In this setting it is common practice to analyze the performance of the update phases and the preparation phase separately where usually the update time takes priority. This is since the assumption is that the number of changes is much greater over time than the complexity of solving the problem in an non-dynamic setting. In this paper we will show that in the sleeping model, with the solution we show here, one achieves dynamic algorithms for decidable problems as well as the {\bf O-LOCAL} class of problems. Our results appear in Table \ref{tab:dynLocal}.

\begin{table}[H]
\begin{center}
  \begin{tabular}{|l|l|l|}
    \hline
        {\bf Method} & {\bf Awake Time} & {\bf Clock Rounds} \\
    \hline
        Theorem \ref{thm: BEvar} & $O(\log^2 \alpha + \log^*\alpha \log^*\beta)$ & $O(\alpha + \log^*\alpha \log^*\beta)$ \\
    \hline
        Theorem \ref{thm:KWsub} & $O(\log^2 \alpha + \log^*\beta)$ & $O(\alpha \log \alpha + \log^* \beta)$ \\
    \hline
        Theorem \ref{thm:KWdebts} & $O(\log \alpha + \log^*\beta)$ & $O(\alpha^{1+\epsilon} + \log^* \beta)$ \\
    \hline
  \end{tabular}
  \caption{: {\bf O-LOCAL} Dynamic algorithms and the trade off between awake complexity and clock round complexity.}  \label{tab:dynLocal}
\end{center}
\end{table}

Another result we offer in this work is a a constant awake time algorithm for solving any {\bf O-LOCAL} problem in graphs with bounded neighborhood independence. The family of graphs with bounded neighborhood independence is a very wide family of dense graphs. In particular, graphs with constant neighborhood independence include line-graphs, claw-free graphs, unit disk graphs, and many other graphs. Thus, these graphs represent very well various types of networks. Our technique for this family of graphs is to build an acyclic orientation of the input graph in constant awake time such that each vertex has constant number of parents in the orientation. This allows us to solve {\bf O-LOCAL} problems in constant time according to the definition of this class.

\subsection{Previous Work}

Several variants of the sleeping setting were studied recently \cite{BT19, CGP20, DHW98, F20, GKKPS08, KPSY11}. The particular setting and complexity measurement we consider in this paper were introduced by Chatterjee, Gmyr and Pandurangan \cite{CGP20} in PODC'20. In this paper the authors presented a Maximal Independent Set randomized algorithm with expected awake complexity of $O(1)$. Its high-probability awake complexity is $O(\log n)$, and its worst-case awake complexity is polylogarithmic. Recently Barenboim and Maimon \cite{BM21} showed a completeness on the class of decidable problems with a tight bound of $\Theta(\log n)$ awake time complexity. That is, they offered an algorithm for solving any decidable problem in the sleeping model in $O(\log n)$ awake time but also showed a specific decidable problem which requires at least $\Omega(\log n)$ awake time.  Furthermore, for the {\bf O-LOCAL} class of problems, among them the problem of MIS, they showed an even better algorithm with $O(\log \Delta)$ awake time. The importance of the sleeping model was shown empirically in experiments \cite{FN01, ZK05}. The energy consumption in an idle state is only slightly smaller than when a node is active. However, in the sleeping state the energy consumption is decreased drastically. Thus, if a node may enter a sleeping mode to save energy during the course of an algorithm, one can significantly improve the energy consumption of the network during the execution of an algorithm.

\subsection{Our Technique}

In \cite{BM21} the authors showed that given an acyclic orientation on the edge set of the input graph $G$, one can use the given orientation to build a binary tree internally in each vertex $v$ and have $v$ in the awake state exactly when one of its parents in the orientation sends information to $v$. This is facilitated to solve any {\bf O-LOCAL} problem. In their paper, Barenboim and Maimon used a $O(\Delta^2)$-vertex-coloring to achieve this required acyclic orientation but we here capitalize on this idea and offer different ways to build an initial orientation such that, in terms of awake complexity, our algorithms are more efficient. 

For graphs with bounded neighborhood independence we do the same by showing that building an initial orientation takes $O(1)$ time in this family of graphs. Furthermore, we show that the length of the orientation, which directly effects the efficiency of the algorithms in both awake time and number of clock rounds required, is at most $K+1$ where $K$ is the neighborhood independence of $G$. This gives us an efficient scheme for solving any {\bf O-LOCAL} problem in constant awake time which, up to a constant, is most efficient.

\section{Preliminaries}

\noindent {\bf The O-LOCAL Class.} This is a class of problems that can be solved in the following way. Given an acyclic orientation of the edges and a problem $P$, each vertex awaits the decisions of its parents in the orientation and is able then to make a decision and find a solution for itself in regard to problem $P$ using only internal computations. Core problems in the distributed setting that belong to this class are the MIS and coloring problems. \\

\noindent {\bf Awake Worst-Case Complexity.} This measurement is defined as the worst-case number of rounds in which any single vertex is awake. That is, if $w(v)$ is the number of awake rounds that a vertex $v \in G$ is awake, then the awake worst case complexity is defined as $\max(w(v)|v\in G)$. In this work, we loosely use the term {\em awake complexity} to mean awake worst-case complexity. \\

\noindent {\bf Neighborhood Independence.} Given a vertex $v \in G$, the neighborhood independence of $v$ is the maximum size set of independent neighbors one can build from the 1-hop neighborhood of $v$. The neighborhood independence of $G$ is the maximum among all neighborhood independence values of the vertices in $G$.

\section{Trade-Offs for {\bf O-LOCAL} Problems}   \label{sec:tradeOffs}

Distributed algorithms in the $\mathcal{LOCAL}$ model have a certain running time which depends on vertices which await their turn to make a decision. On the contrary, in the sleeping model these vertices do not count towards the worst-case awake time of the algorithm. But one might still wish to achieve efficiency in both measurements, that is, both the energy required for the algorithm to terminate as well as the number of total number of communication clock rounds. And thus, algorithms that offer trade offs between these two measurements are of interest. Here we bring three such algorithms and show this trade off can be achieved. Furthermore, our result from Theorem \ref{thm:KWdebts} improves on the algorithm shown in \cite{BM21} for the {\bf O-LOCAL} class of problems regardless of the trade off mentioned. Specifically, the number of clock rounds is the same while we improve on the worst case awake time. The results of this section are summarized in Table \ref{tab:OLOCAL}.

For the purpose of this section we will utilize the algorithm suggested in \cite{BM21} for solving any {\bf O-LOCAL} problem in $O(\log \Delta + \log^*n)$ awake time.  In this algorithm, we start by coloring the vertices and using this coloring we build binary trees in each vertex according which the vertex awakes and sleeps. The depth of these trees is $O(\log \Delta)$ which sets the awake time complexity of the algorithm. The number of colors used (originally $O(\Delta^2)$ in \cite{BM21}) sets the number of clock rounds required for using these binary trees to solve an {\bf O-LOCAL} problem. Given an initial coloring of the graph with $d$ colors, it is obvious one can orient the edges of the graphs to achieve an acyclic orientation of length $d$. Using the algorithm described above, from now denoted $A$, one can solve any {\bf O-LOCAL} problem in $O(\log d)$ awake time and $O(d)$ clock rounds. We will use this property of the algorithm $A$ for iterative re-coloring of the input graph. This method reduces the number of clock rounds required while maintaining low energy consumption in the network. We state the {\bf performance property of the algorithm $A$} as it plays an important role in the analysis of the algorithms described in this section.\\

\noindent {\bf The Performance Property of Algorithm $A$.} Given a graph $G$ with $d$-vertex-coloring, the algorithm $A$ can be used for solving any {\bf O-LOCAL} problem in $O(\log d)$ awake time and $O(d)$ clock rounds.

\subsection{Sleeping KW Reduction}  \label{subsec:sleepKW}

We give our own version to The {\em Kuhn-Wattenhofer Color Reduction (KW)} \cite{KW06} algorithm. This version is more efficient in terms of the sleeping model. The original algorithm starts with an initial coloring provided by the algorithm of Linial \cite{L86} and proceeds to reduce the number of colors in phases. There are $O(\log \Delta)$ phases, each takes $O(\Delta)$ time. In such a reduction, each vertex $v$ with color $l$ awaits the $l$ clock round and inspects the colors of its neighbors. Then $v$ chooses the smallest possible color available as its final color. In such a reduction the running time depends on the initial color palette size. In the KW algorithm these palettes are chosen such that they are of size at most $O(\Delta)$.

In terms of awake time the core issue is that a vertex $v$ requires information from all its neighbors potentially requiring its neighbors to be awake to transmit their colors to $v$. This means that there could be vertices that are kept awake only to report their colors to the vertices that color themselves at round $l$. So, we use an alternative. Instead of using the regular reduction we execute the algorithm $A$ to use the minimal colors possible to achieve the same result as the reduction. The algorithm $A$ depends on the size of the palette as well as for palette of size $d$ the performance of $A$ is $O(\log d)$ awake time and $O(d)$ clock rounds. Asymptotically, we do not require additional clock rounds than the original version of the KW algorithm but we do require less awake rounds in each phase in the KW reduction. Therefore, the number of clocks required for our version of KW reduction is still $O(\Delta \log \Delta)$. Since the KW algorithm has $O(\log \Delta)$ phases and each requires at most $O(\log \Delta)$ awake rounds, we achieve $O(\log^2 \Delta)$ awake time. We must not forget that the reduction starts with the coloring of Linial and therefore we have the following result.

\begin{thm}  \label{thm:KWsub}
Any {\bf O-LOCAL} problem can be solved in $O(\log^2 \Delta + \log^*n)$ awake time using at most $O(\Delta \log \Delta + \log^*n)$ clock rounds.
\end{thm}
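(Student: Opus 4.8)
The plan is to assemble the result from three ingredients already developed in the excerpt: Linial's coloring algorithm, the author's sleeping-model version of the Kuhn–Wattenhofer color reduction, and the performance property of algorithm $A$. First I would invoke Linial's algorithm \cite{L86} to obtain, in $O(\log^* n)$ rounds, an $O(\Delta^2)$-vertex-coloring of $G$. Since every node participates in all of these rounds, both the clock round complexity and the awake time of this preprocessing step are $O(\log^* n)$; this accounts for the additive $\log^* n$ terms in both bounds.

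Next I would feed this initial coloring into the sleeping KW reduction described in Section~\ref{subsec:sleepKW}. The key point is that the reduction proceeds in $O(\log \Delta)$ phases, and in each phase we do not run the naive reduction (which would keep neighbors awake merely to report their colors) but instead invoke algorithm $A$: each phase amounts to solving an {\bf O-LOCAL} problem (namely, recoloring within a palette of size $O(\Delta)$) on a graph whose current coloring has size at most $O(\Delta)$ within each color class being merged. By the performance property of $A$, a phase operating on a palette of size $d = O(\Delta)$ costs $O(\log d) = O(\log \Delta)$ awake time and $O(d) = O(\Delta)$ clock rounds. Summing over the $O(\log \Delta)$ phases gives $O(\log^2 \Delta)$ awake time and $O(\Delta \log \Delta)$ clock rounds for the reduction, which matches the clock complexity of the classical KW algorithm while lowering the awake cost per phase.

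At the end of the reduction we hold an $O(\Delta)$-vertex-coloring of $G$. I would then orient each edge from the lower-colored endpoint to the higher-colored endpoint (or vice versa); acyclicity is immediate since colors strictly decrease along directed paths, and the length of this orientation is $O(\Delta)$. One final application of algorithm $A$ on this $O(\Delta)$-coloring solves the target {\bf O-LOCAL} problem $P$ in $O(\log \Delta)$ awake time and $O(\Delta)$ clock rounds, which is absorbed into the bounds already obtained. Adding the three contributions yields $O(\log^2 \Delta + \log^* n)$ awake time and $O(\Delta \log \Delta + \log^* n)$ clock rounds, as claimed.

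The main obstacle — and the part that needs the most care — is the bookkeeping inside the KW phases: one must verify that, when algorithm $A$ is used to realize a single color-reduction phase, the palettes handed to it really are of size $O(\Delta)$ (this is exactly the choice of merge groups in the original KW scheme), and that the awake-time and clock-round costs of the $O(\log \Delta)$ sequential invocations of $A$ add up rather than multiply incorrectly, i.e.\ that nothing forces a vertex to stay awake across phases beyond its own $O(\log \Delta)$ active rounds per phase. Once that accounting is pinned down, the theorem follows by summation.
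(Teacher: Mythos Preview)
Your proposal is correct and follows essentially the same approach as the paper: Linial's $O(\Delta^2)$-coloring for the $\log^* n$ terms, then $O(\log\Delta)$ KW phases each realized by algorithm $A$ on palettes of size $O(\Delta)$ to get $O(\log^2\Delta)$ awake time and $O(\Delta\log\Delta)$ clock rounds, with a final call to $A$ on the resulting $(\Delta+1)$-coloring. Your added remarks about the per-phase bookkeeping and the final orientation step are sound and only make explicit what the paper leaves implicit in the discussion preceding the theorem.
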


\subsection{Preferring Low Awake Complexity} 

We give yet another version for solving problems in the {\bf O-LOCAL} class. We use the KW reduction again only this time we reduce the coloring of several sets of colors at once instead of one color at a time. In the previous version, at each phase of the KW reduction we achieved a palette of size $O(\Delta)$ for the reduction by reducing the colors of a union on pairs of color sets. Let $V_1,\dots,V_r$ be the color sets and let $\epsilon>0$ be some constant. Instead of reducing the colors on $V_1 \cup V_2, V_3 \cup V_4, \dots$, which means pairs of color sets, we reduce the colors on the union of $\Delta^\epsilon$ color sets. The reduction is done on the color sets of the union $V_1 \cup V_2 \cup \dots V_{\Delta^\epsilon}, V_{\Delta^{\epsilon+1}} \cup V_{\Delta^{\epsilon+2}} \dots$ and on each such union we execute the algorithm $A$. According to the performance property of the algorithm $A$, this requires $O((1+\epsilon) \log \Delta)$ awake rounds and $O(\Delta^{1+\epsilon})$ clock rounds. The number of phases we now require is $O(\frac{\log \Delta}{\log \Delta^\epsilon}) = O(\frac{1}{\epsilon})$. We still manage to achieve a $(\Delta+1)$-vertex-coloring. As mentioned before, this result improves on the clock round complexity achieved in \cite{BM21} without incurring additional awake complexity.

\begin{thm}  \label{thm:KWdebts}
Any {\bf O-LOCAL} problem can be solved in $O(\log \Delta + \log^*n)$ awake time using at most $O(\Delta^{1+\epsilon} + \log^*n)$ clock rounds where $\epsilon>0$ is a constant.
\end{thm}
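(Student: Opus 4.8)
The plan is to realize Theorem~\ref{thm:KWdebts} as a careful re-accounting of the Kuhn-Wattenhofer color reduction in which each reduction step is performed by the algorithm $A$ rather than by the naive one-color-at-a-time sweep, and in which we batch together $\Delta^\epsilon$ color classes per step instead of two. First I would recall the starting point: Linial's algorithm produces an $O(\Delta^2)$-coloring in $O(\log^* n)$ clock rounds, and this contributes the additive $\log^* n$ terms to both complexity measures; the remaining work reduces from $O(\Delta^2)$ colors down to $\Delta+1$ colors, so it suffices to bound that reduction phase by $O(\log \Delta)$ awake time and $O(\Delta^{1+\epsilon})$ clock rounds. Throughout I would invoke the \textbf{performance property of algorithm $A$}: a $d$-coloring lets $A$ solve any {\bf O-LOCAL} problem (in particular, recolor a $d$-colored graph down to $\Delta+1$ colors, which is itself an {\bf O-LOCAL} task given the orientation induced by the coloring) in $O(\log d)$ awake time and $O(d)$ clock rounds.

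Next I would describe one batched reduction step. Suppose the current coloring partitions $V$ into color classes, grouped into blocks of $\Delta^\epsilon$ consecutive classes each. Fix one block $V_{i} \cup V_{i+1} \cup \cdots \cup V_{i+\Delta^\epsilon - 1}$; the subgraph induced on this block carries a proper $\Delta^\epsilon$-coloring, hence (orienting edges from lower to higher color) an acyclic orientation of length $\Delta^\epsilon$, so $A$ recolors it into the palette $\{1,\dots,\Delta+1\}$ — using that the maximum degree is $\Delta$ — in $O(\log \Delta^\epsilon) = O(\epsilon \log \Delta)$ awake time and $O(\Delta^\epsilon)$ clock rounds. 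Running this over all blocks (the blocks for distinct ranges use disjoint scheduling windows, as in the standard KW argument, so clock rounds add across blocks within a step but awake rounds do not, since a vertex lies in exactly one block) takes $O(\epsilon \log \Delta)$ awake time and $O(\Delta \cdot \Delta^\epsilon / \Delta^\epsilon \cdot \ldots)$ — more precisely $O(\Delta^{1+\epsilon})$ clock rounds, since there are at most (current number of colors)$/\Delta^\epsilon$ blocks and each block recolors into $\Delta+1$ fresh colors; I would be careful here that after one step the number of color classes drops from $c$ to $(\Delta+1)\cdot c/\Delta^\epsilon$, i.e.\ by a multiplicative factor of roughly $\Delta^{\epsilon}/(\Delta+1) = \Delta^{\epsilon - 1 - o(1)}$, which for $\epsilon$ close to $1$ is still a genuine shrink but is slow; the cleaner bookkeeping is that each step multiplies the color count by $(\Delta+1)/\Delta^\epsilon$, so the color count after $j$ steps is $O(\Delta^2) \cdot ((\Delta+1)/\Delta^\epsilon)^j$.

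Then I would count the number of steps. Starting from $O(\Delta^2)$ colors and shrinking by a factor of $\Delta^\epsilon$ per step (absorbing the $(\Delta+1)$ blow-up by instead phrasing each step as reducing blocks of $\Delta^{\epsilon}$ classes into $\Delta^\epsilon/2$-ish many, or simply noting that after enough steps the count is $O(\Delta)$ and a final single $A$-invocation on the whole graph with its $O(\Delta)$-coloring finishes in $O(\log \Delta)$ awake and $O(\Delta)$ clock rounds), the number of steps needed to go from $\Delta^2$ to $\Delta$ is $O(\log_{\Delta^\epsilon} \Delta) = O(1/\epsilon)$, a constant. Multiplying: total awake time is $O(1/\epsilon) \cdot O(\epsilon \log \Delta) = O(\log \Delta)$, and total clock rounds is $O(1/\epsilon) \cdot O(\Delta^{1+\epsilon}) = O(\Delta^{1+\epsilon})$ (the $1/\epsilon$ factor is a constant, absorbed). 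Adding back Linial's contribution yields $O(\log \Delta + \log^* n)$ awake time and $O(\Delta^{1+\epsilon} + \log^* n)$ clock rounds, and finally one more invocation of $A$ on the resulting $(\Delta+1)$-coloring solves the target {\bf O-LOCAL} problem $P$ within the same asymptotic budget, completing the proof.

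The main obstacle I anticipate is the precise accounting of clock rounds within a single batched step: one must check that the recoloring sub-invocations of $A$ on the different blocks can be scheduled so that their clock-round costs genuinely telescope to $O(\Delta^{1+\epsilon})$ rather than $O(c \cdot \Delta^\epsilon / \Delta^\epsilon)$ summing to something larger, and that the output palettes of different blocks are kept disjoint (or are deliberately reused in a way that still forms a proper coloring of the union) so that the color count actually contracts; this is exactly the delicate point in the original KW analysis and is where I would spend the most care, while the awake-time bound — which rests only on the $O(\log d)$ depth of $A$'s internal binary trees and the constant number of steps — is comparatively routine.
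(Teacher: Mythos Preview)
Your overall strategy---batch together many color classes per KW phase and run algorithm $A$ on each batch---is exactly the paper's strategy, but your batch size is wrong and this breaks the arithmetic. You group $\Delta^\epsilon$ \emph{individual} color classes into a block and run $A$ to recolor that block with $\Delta+1$ colors. For any $\epsilon<1$ this does not shrink anything: the block had $\Delta^\epsilon$ colors and comes out with $\Delta+1>\Delta^\epsilon$ colors. You yourself compute that one step multiplies the global color count by $(\Delta+1)/\Delta^\epsilon$, which for $\epsilon<1$ exceeds $1$; the parenthetical attempt to ``absorb the $(\Delta+1)$ blow-up'' does not go through, and the subsequent claim that the number of steps is $O(\log_{\Delta^\epsilon}\Delta)=O(1/\epsilon)$ does not follow from your setup. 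The worry you flag at the end about clock-round scheduling is a symptom of the same miscount.

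The paper batches at a coarser granularity. In the standard KW scheme the $O(\Delta^2)$ colors are already organized into $O(\Delta)$ color \emph{sets} $V_1,\dots,V_r$, each holding $\Delta+1$ colors; the classical reduction merges pairs $V_{2i-1}\cup V_{2i}$ and recolors each pair from $2(\Delta+1)$ down to $\Delta+1$. The paper instead merges $\Delta^\epsilon$ of these sets at a time, so each union carries $O(\Delta^{1+\epsilon})$ colors; running $A$ on that union costs $O((1+\epsilon)\log\Delta)$ awake rounds and $O(\Delta^{1+\epsilon})$ clock rounds, and outputs $\Delta+1$ colors---a genuine shrink by a factor of $\Delta^\epsilon$ per phase. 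The unions are vertex-disjoint, so they run in parallel and a single phase costs $O(\Delta^{1+\epsilon})$ clock rounds in total; now the phase count really is $O(1/\epsilon)$, yielding $O(\log\Delta)$ awake time and $O(\Delta^{1+\epsilon})$ clock rounds. In short, replace ``$\Delta^\epsilon$ color classes'' by ``$\Delta^\epsilon$ KW color sets of size $\Delta+1$'' (equivalently, blocks of $\Delta^{1+\epsilon}$ individual colors) and the rest of your outline goes through.
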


\subsection{Preferring Clock Rounds}

In this section we offer another sleeping algorithm in which the number of clock rounds is lower if one is willing to pay in awake complexity. We note, though, that this payment is not significant compared to the complexity achieved in Section \ref{subsec:sleepKW}. In their monograph \cite{BEbook13} Barenboim and Elkin showed that the KW color reduction can be used in combination with defective coloring to create a series of $\log^*\Delta$ algorithms such that the overall running time of computing a $(\Delta+1)$-vertex-coloring is $O(\Delta + \log^*\Delta \log^* n)$ clock rounds. In that series of algorithms they used KW reduction and the KW iterative reduction in each algorithm for reducing the number of colors of the defective subsets of colors. In \cite{BEbook13} Barenboim and Elkin named {\em KW iterative reduction} the algorithm of which the reduction starts with a given coloring $\phi$. Given a coloring using $a$ colors, it is known that the KW reduction in this case requires $O(\log \frac{a}{\Delta+1})$ phases each in which a reduction is done on palette of size $O(\Delta)$. We can replace both these algorithms with our own versions of KW reduction for the sleeping model (as we only changed the technique of reduction in each phase, the number of phases remain the same in the KW iterative reduction. We call this the {\em Sleeping KW Iterative Reduction}). We do this by defining the algorithm $H_1$ to be the algorithm from Theorem \ref{thm:KWsub}. For integer values of $k \geq 2$ we define the algorithm $H_k$ in Algorithm \ref{alg:ak}. Ours is a revision of the algorithm in \cite{BEbook13}. This variation brings us closer to the lower bound of the {\bf O-LOCAL} class of problems while incurring a very small additive factor in the awake complexity in comparison to the algorithms in Section \ref{subsec:sleepKW}.

\begin{algorithm}[H] 
\caption{$\mathcal{H}_k(G)$}
\begin{algorithmic}[1]   \label{alg:ak}

\STATE $p = \log^{(k-1)} \Delta$.
\STATE Compute an $O(\frac{\Delta}{p})$-defective coloring using $p^2$ colors. Denote $G_1, \dots, G_{p^2}$ the subgraphs induced by the color sets. 

\FOR {each $G_i$ in parallel}
	\STATE Compute a proper coloring for $G_i$ using the algorithm $\mathcal{H}_{k-1}$. Denote the coloring as $\mu_i$.
\ENDFOR

\STATE Denote as $\mu$ the $O(p \Delta)$-coloring created from $\mu_1, \dots, \mu_{p^2}$. Invoke the sleeping KW iterative reduction on $\mu$ to gain a $(\Delta+1)$-coloring on $G$. Denote final coloring as $\phi$.

\STATE return $\phi$

\end{algorithmic}
\end{algorithm}

\begin{thm} \label{thm: BEvar}
Any {\bf O-LOCAL} problem can be solved in $O(\log^2 \Delta + \log^*\Delta \log^*n)$ awake time using at most $O(\Delta + \log^*\Delta \log^*n)$ clock rounds.
\end{thm}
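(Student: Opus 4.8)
The plan is to reduce the task to computing a $(\Delta+1)$-vertex-coloring: once such a coloring is available, the performance property of algorithm $A$ solves any {\bf O-LOCAL} problem with $O(\log\Delta)$ extra awake rounds and $O(\Delta)$ extra clock rounds, both dominated by the bounds in the statement. Thus it suffices to show that $\mathcal{H}_K$, invoked with $K$ the least integer for which $\log^{(K-1)}\Delta=\Theta(1)$ (so $K=\log^*\Delta+O(1)$), returns a proper $(\Delta+1)$-coloring within $O(\log^2\Delta+\log^*\Delta\log^*n)$ awake rounds and $O(\Delta+\log^*\Delta\log^*n)$ clock rounds.

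First I would establish correctness by induction on $k$: $\mathcal{H}_k$ run on a graph of maximum degree $\delta$ returns a proper $(\delta+1)$-coloring. The base case $k=1$ is Theorem~\ref{thm:KWsub}. For the step, the $O(\delta/p)$-defective $p^2$-coloring of step~2 splits $G$ into subgraphs $G_1,\dots,G_{p^2}$ of maximum degree $O(\delta/p)$; by the inductive hypothesis each $\mu_i$ is a proper $O(\delta/p)$-coloring of $G_i$, so the merged coloring $\mu$ is proper on $G$ — two adjacent vertices sharing a $\mu$-value would lie in a common defective class and also share a $\mu_i$-color, contradicting properness of that $\mu_i$ — and uses $p^2\cdot O(\delta/p)=O(p\delta)$ colors, which the sleeping KW iterative reduction brings down to $\delta+1$. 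Its correctness is inherited from the classical KW iterative reduction, since only the per-phase reduction was replaced (by $A$).

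For the complexity I would unroll the recursion along a root-to-leaf branch of its (fully parallel) call tree. Writing $\delta_0=\Delta$, $\delta_{i+1}=O(\delta_i/p_i)$ with $p_i\approx\log^{(K-i-1)}\Delta$ (so $p_i$ is defined for $i=0,\dots,K-2$), the parameters satisfy $p_{i+1}\approx 2^{p_i}$, so they grow like an iterated exponential along the branch; in particular $p_{K-2}=\Theta(\log\Delta)$, hence $\sum_i p_i=O(\log\Delta)$, and $\delta_i=\Theta(\Delta/(p_0\cdots p_{i-1}))$, so $\delta_{K-1}=O(\Delta/\log\Delta)$. At every internal node of the branch, step~2 is a Linial-type defective coloring costing $O(\log^*n)$ rounds that are awake as well as clock (there is nothing to sleep through, as every vertex communicates every round), while step~6 reduces from $O(p_i\delta_i)$ to $\delta_i+1$ colors in $O(\log p_i)$ phases of $A$ on palettes of size $O(\delta_i)$, that is, $O(\log\delta_i\log p_i)$ awake and $O(\delta_i\log p_i)$ clock rounds; at the leaf, $\mathcal{H}_1$ costs $O(\log^2\delta_{K-1}+\log^*n)=O(\log^2\Delta+\log^*n)$ awake and $O(\delta_{K-1}\log\delta_{K-1}+\log^*n)=O(\Delta+\log^*n)$ clock rounds. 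Summing over the levels: the step-6 clock cost at level $i$ equals $O(\Delta/(p_0\cdots p_{i-2}))$ (the factor $\log p_i=\Theta(p_{i-1})$ cancels one term; the product is empty for $i\le1$), which telescopes to $O(\Delta)$ because every $p_j\ge 2$; the step-6 awake cost at level $i$ is $O(\log\Delta\cdot p_{i-1})$, summing to $O(\log\Delta\cdot\sum_i p_i)=O(\log^2\Delta)$; and the $K-1=\Theta(\log^*\Delta)$ nested defective colorings contribute $O(\log^*\Delta\log^*n)$ to each measure. Combining these with the leaf costs yields $O(\log^2\Delta+\log^*\Delta\log^*n)$ awake and $O(\Delta+\log^*\Delta\log^*n)$ clock rounds.

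The step I expect to be the main obstacle is precisely this final accounting: one has to verify that the per-level KW-iterative-reduction costs telescope rather than accumulating a spurious $\log^*\Delta$ factor, which hinges on the facts that at level $i$ the degree shrinks (essentially) by $p_i$ while the number of phases is only $O(\log p_i)=O(p_{i-1})$, and that the iterated logarithms $p_0,\dots,p_{K-2}$ sum to $O(\log\Delta)$. One must also check that the $O(1)$ defect factors of the defective colorings do not compound harmfully over the $\Theta(\log^*\Delta)$ levels — here it helps that $2^{O(\log^*\Delta)}=O(\log\Delta)$ — and record the point, used for the awake bound, that a defective coloring (unlike a single KW reduction phase) genuinely incurs $\Theta(\log^*n)$ awake rounds in the sleeping model, which is the source of the $\log^*\Delta\log^*n$ term.
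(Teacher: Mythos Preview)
Your argument is correct and follows the same overall plan as the paper: unroll the recursion defining $\mathcal{H}_K$, bound the cost contributed at each level (the defective coloring in step~2 plus the sleeping KW iterative reduction in step~6), sum over the $\Theta(\log^*\Delta)$ levels, and add the base case $\mathcal{H}_1$. The paper's execution is coarser and less self-contained: it simply bounds the step-6 awake cost at \emph{every} level by the uniform quantity $O(\log\Delta\log\log\Delta)$ (taking the largest value of $\log p$, namely $\log\log\Delta$), multiplies by $\log^*\Delta$ levels to obtain $O(\log\Delta\log\log\Delta\cdot\log^*\Delta)$, and then observes that this is dominated by the $O(\log^2\Delta)$ coming from $\mathcal{H}_1$; for the clock-round bound it does no computation at all and cites the Barenboim--Elkin monograph. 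Your telescoping analysis---using $\log p_i=p_{i-1}$ and $\sum_i p_i=O(\log\Delta)$---is tighter on the internal levels and also re-derives the $O(\Delta)$ clock bound from scratch; you further supply the correctness induction and the check that the $O(1)$ defect constants compound only to $2^{O(\log^*\Delta)}=o(\log\Delta)$, none of which the paper spells out. Both routes land on the same final bounds (the leaf cost $T(\mathcal{H}_1)=O(\log^2\Delta)$ dominates either way), so the extra care buys self-containment rather than an asymptotic improvement.
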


\begin{proof}
In \cite{BEbook13} it is shown that the number of clock rounds required for the algorithm $H_{\log^*\Delta}$ is $O(\Delta + \log^*\Delta \log^*n)$. Our version of KW require $O(\Delta \log \Delta + \log^*\Delta)$ clock rounds which is the same as the standard KW reduction which is used in \cite{BEbook13}. Therefore the number of clock rounds is the same as in \cite{BEbook13}. Thus, we focus this proof on the awake complexity. The proof is by reduction on $2 \leq i \leq k$. Denote $T(H_i)$ the awake time of the algorithm $H_i$. We inspect the lines in Algorithm \ref{alg:ak}. Line 2 requires $O(\log^*n)$ awake rounds \cite{L86}. Line 4 requires $T(H_{i-1})$ awake rounds. Line 6 invokes the sleeping KW iterative reduction. Each phase in this variant takes $O(\log \Delta)$ time. There are $O(\log \frac{p\Delta}{\Delta}) = O(\log p) = O(\log^{(i)}\Delta)$ phases and overall line 6 requires $O(\log \Delta \log^{(i)}\Delta)$ awake rounds. Specifically, for every value of $2 \leq i \leq k$, line 6 requires at most $O(\log \Delta \log \log \Delta)$. We denote $c$ as a super-constant to represent the maximum between constants hidden in the awake times for all algorithms. Then,
$$T(H_{\log^*\Delta}) \leq c(\log \Delta \log \log \Delta + \log^*n) + T(H_{\log^*\Delta-1}) \leq $$
$$2c(\log \Delta \log \log \Delta + \log^*n) + T(H_{\log^*\Delta-2}) \leq $$
$$3c(\log \Delta \log \log \Delta + \log^*n) + T(H_{\log^*\Delta-3}) \leq \dots$$
$$i \cdot c(\log \Delta \log \log \Delta + \log^*n) + T(H_{\log^*\Delta-i}) \leq \dots$$
$$c(\log^*\Delta-1)(\log \Delta \log \log \Delta + \log^*n) + T(H_{1})$$ \\
From Theorem \ref{thm:KWsub} executing $H_1$ on a graph of degree $\frac{\Delta}{\log \Delta}$ gives us $T(H_1) = O(\log^2 \Delta + \log^*n)$. Thus, $T(H_{\log^*\Delta}) =$ $O(\log^2 \Delta + \log^*n) + O(\log \Delta \log \log \Delta \log^* \Delta + \log^*\Delta \log^*n) =$ $O(\log^2 \Delta + \log^*\Delta \log^*n)$. \\
Using this $(\Delta+1)$ coloring with the algorithm $A$ provides a solution to any {\bf O-LOCAL} problem in the same awake and running times as the algorithm $H_{\log^*\Delta}$.
\end{proof}

\section{Sleeping in Bounded Neighborhood Independence}

Given some acyclic orientation $\mu$ on an input graph $G$ with bounded neighborhood independence $K$, we give an algorithm that solves any {\bf O-LOCAL} problem in $O(K)$ awake deterministic time. The general scheme of the algorithm is quite simple. Using the orientation $\mu$, we build a new partial orientation $\phi$ on the edge set in the sense that some edges are not being orientated. We show that solving an {\bf O-LOCAL} problem $P$ using this new orientation is the same as solving $P$ on the original orientation. The difference is that the new orientation requires less awake rounds from vertices in $G$ which is our goal.

The new orientation is built as follows. At first all vertices of the graph are awake and each vertex is collecting the information from its 1-hop-neighborhood. Then all vertices enter a sleep state. Each vertex $v$ chooses parents from its 1-hop-neighborhood as follows. Internally, $v$ calculates an independent ruling set of the subgraph induced by its neighbors which have labels smaller than the label of $v$ and notes them as its parents. Moreover, these vertices are chosen such that their labels are the greatest among the potential parents of $v$. This can be done greedily by choosing the neighbor $u$ such that $ID(u) = \min(ID(w)\ |\ w \in N^*(v))$ and $ID(u) < ID(v)$ (where $N^*(v)$ is the 1-hop-neighborhood of $v$ without including $v$ itself). Then $v$ sets $u$ as a parent. $v$ then repeats this process until it can no longer choose new parents that holds the above requirements.
For solving a given problem $P \in $ {\bf O-LOCAL} each vertex is awakened in the clock round equal to each of its parents labels as well as the clock round that equals its own label. We now show that solving $P$ on $G$ using the orientation $\phi$ indeed gives us the solution of $P$ on $G$ according to $\mu$.

\begin{lem}
Let $v$ be a vertex in $G$ and $u$ be a neighbor of $v$ such that $\mu(u) < \mu(v)$. Then, in clock rounds $L(v)$, $v$ has the knowledge of the solution of $u$, denote $s(u)$, made. Thus, $v$ can resolve $P$ internally.
\end{lem}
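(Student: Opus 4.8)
The plan is to prove the lemma by strong induction on the label $L(v)$ (equivalently, on the position of $v$ in the acyclic order induced by $\mu$), showing simultaneously that (i) $\phi$ is an acyclic orientation whose transitive closure agrees with that of $\mu$ in the relevant sense, and (ii) every vertex $u$ with $\mu(u) < \mu(v)$ that is adjacent to $v$ either has an edge oriented towards $v$ in $\phi$, or is connected to $v$ by a directed path in $\phi$. Before starting the induction I would first fix notation: let $P(v)$ denote the parent set that $v$ selects in the construction (the independent ruling set of the lower-labelled neighbors of $v$, chosen greedily with priority to the largest labels), and recall from the construction that $v$ is awakened exactly at the clock rounds $\{L(v)\} \cup \{L(p) : p \in P(v)\}$, and that by the ruling-set property every lower-labelled neighbor of $v$ is dominated by (adjacent to or equal to) some vertex in $P(v)$.

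\smallskip

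The main structural claim to establish is: \emph{for every neighbor $u$ of $v$ with $\mu(u) < \mu(v)$, there is a directed path from $u$ to $v$ in $\phi$, and moreover every internal vertex $w$ on this path satisfies $\mu(w) \le \mu(v)$ and actually $L(w) < L(v)$.} The proof is: by the ruling-set property either $u \in P(v)$, in which case the edge $u \to v$ is in $\phi$ and we are done, or $u$ is adjacent to some $p \in P(v)$ with $\mu(u) < \mu(p) < \mu(v)$ (the inequality $\mu(u) < \mu(p)$ because parents were chosen to have the largest labels among candidates, so $u$ was skipped in favor of $p$, and $p$ being a neighbor of $u$ with larger label was available — here I need to be a little careful that the greedy priority rule indeed forces $\mu(u) < \mu(p)$; this is the point where I'd spell out the selection invariant precisely). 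In the second case, by the induction hypothesis applied to $p$ (valid since $L(p) < L(v)$), there is a directed path from $u$ to $p$ in $\phi$ with all internal labels below $L(p) < L(v)$, and appending the edge $p \to v$ gives the desired path to $v$.

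\smallskip

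Given this structural claim, the wake-up argument is straightforward and I would present it as the conclusion: I would argue by a second induction along the directed path that information propagates. Each vertex $w$ on the path from $u$ to $v$, upon being awakened at round $L(w)$, already knows the solutions $s(\cdot)$ of all of its own parents (again by the structural claim and induction on $L(w)$, since $L(\text{parent}) < L(w)$), can therefore compute and broadcast $s(w)$ at round $L(w)$ since the problem is in {\bf O-LOCAL}, and its children (which include the next vertex on the path, since the edge is in $\phi$) are awake at round $L(w)$ by construction — they were scheduled to wake at each parent's label. Chaining this along the path, $s(u)$ reaches $v$ no later than round $L(v)$: more precisely $v$ learns $s(p)$ for each $p \in P(v)$ at round $L(p) < L(v)$ when it is awake, and by the structural claim combined with O-LOCAL closure this transitively encodes enough to determine $s(u)$ for every lower neighbor $u$. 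Hence at round $L(v)$ the vertex $v$ possesses $s(u)$ for all neighbors $u$ with $\mu(u) < \mu(v)$, which is exactly the information needed to resolve $P$ internally by the definition of the {\bf O-LOCAL} class.

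\smallskip

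The step I expect to be the real obstacle is pinning down the precise invariant satisfied by the greedy parent-selection rule — in particular, why the union of the parent labels together with $v$'s own label suffices to wake $v$ at all the rounds where an ancestor of $v$ (not just an immediate parent) is transmitting, and why no directed cycle is created in $\phi$. Acyclicity of $\phi$ should follow immediately because every edge of $\phi$ is also an edge of $G$ oriented from lower $\mu$-label to higher $\mu$-label, so $\phi$ refines the acyclic order of $\mu$; I would state this at the outset. The subtler point is the clock-round bookkeeping: I must make sure that when an internal path vertex $w$ transmits $s(w)$ at round $L(w)$, its successor on the path is indeed listening, which requires that $w$ appear in the successor's parent set — and this is exactly guaranteed because the path edges are $\phi$-edges and the wake schedule of a vertex includes every one of its $\phi$-parents' labels. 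Once that is nailed down the rest is routine chaining.
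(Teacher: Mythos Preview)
Your proposal is correct and is essentially the same argument as the paper's: the paper builds the path by an explicit descent $v, w_1, w_2, \ldots$ through ruling-set parents, each a neighbor of $u$ with strictly decreasing label above $L(u)$, while your strong induction on $L(v)$ unrolls to precisely this descent. The point you flag as the real obstacle---that the greedy largest-label rule guarantees the dominating parent $p$ satisfies $\mu(u) < \mu(p)$---is exactly the step the paper uses (their ``$L(w_1) > L(u)$''), and your acyclicity and wake-schedule bookkeeping are handled just as in the paper; the only phrasing to tighten is that $v$ does not ``determine $s(u)$'' from $s(p)$ by O-LOCAL closure but rather receives $s(u)$ explicitly via the chaining you describe one sentence earlier.
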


\begin{proof}
We will show that there must be an oriented path in $\phi$ on which the value of $s(u)$ is propagated such that in round $L(v)$ the information of $s(v)$ is in $v$ and $v$ can use the information to calculate $s(v)$. 
In the case $u$ is a parent of $v$ in $\phi$ then $v$ is awake in clock round $L(u)$ and receives the value $s(u)$ for $P$ in that round and we are done. Otherwise, $u$ is not a parent of $v$ in $\phi$. Denote the subset of vertices $v$ chose as a ruling set from its neighbors as $M(v)$. When $v$ chose the set $M(v)$ then there must be a vertex $w_1 \in M(v)$ which $v$ chose over $u$. Thus, $w_1$ and $u$ are neighbors in $G$, and $L(w_1) > L(u)$. \\
If $u$ is a parent of $w_1$ in $\phi$ then $w_1$ is awake in clock round $L(u)$ and receives the knowledge of $s(u)$ in that round. Then $w_1$ passes this knowledge to $v$ in clock round $L(w_1)$ as $v$ is awake in that clock round. Otherwise, If $u$ is not a parent of $w_1$ in $\phi$ then there must be a vertex $w_2 \in M(w_1)$ which $w_1$ chose over $u$. Therefore, $L(w_1) > L(w_2) > L(u)$ and $w_2$ is also a neighbor of $u$ in $G$. The same as before, if $u$ is a parent of $w_2$ in $\phi$ then $w_2$ passes $s(u)$ to $w_1$ in clock round $L(w_2)$ and $w_1$ passes $s(u)$ to $v$ in clock round $L(w_1)$. \\
Since we assume the graph is finite and therefore any acyclic orientation on its edges is finite, an oriented path can only be of length of the orientation, denote $\ell$. Note that for each $i$ we have the following: First, $w_i$ is a neighbor of $u$ in $G$. Secondly, for each $1 \leq j \leq i$, any $w_j$ is a neighbor of $u$ and did not choose $u$ to be in $M(w_j)$. Thirdly, for each $i$ we have $L(w_i) < L(w_{i-1})$. We can have only so many repetitions of reducing the labels for the value of $i$ until we reach the label of $u$ itself. Thus, for some value of $i \leq \ell$, $w_i$ must choose $u$ to be in $M(w_i)$ and our oriented path is complete. it contains the vertices, in order, $u, w_i, w_{i-1}, \dots, v$. Each vertex in this path is awake in the clock round equals to the label of its parent allowing $s(u)$ to propagate to $v$ until clock round $L(v)$ arrives in which $v$ can use $s(v)$ to compute $s(v)$.
\end{proof}

\noindent From the above lemma we have that $v$ has all resolutions of its parents according to the orientation $\mu$ at clock round $L(v)$ in which $s(v)$ is calculated. Note that $v$ has at most $K$ parents in the orientation $\phi$ as its parents has to be independent of each other. According to the definition of the problem $P$ as a {\bf O-LOCAL} problem, we achieve the following result.

\begin{thm} \label{thm:givenOrient}
Given a graph with an acyclic orientation, any {\bf O-LOCAL} problem can be solved deterministically within $K+1$ awake rounds where $K$ is the maximal neighborhood independence of the input graph $G$.
\end{thm}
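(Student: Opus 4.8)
The plan is to combine the preceding lemma with two easy observations: that the construction of $\phi$ terminates and uses only labels already present in $G$, and that every vertex has at most $K$ parents in $\phi$. First I would recall that the lemma guarantees, for every vertex $v$ and every $\mu$-parent $u$ of $v$ (i.e.\ every neighbor $u$ with $\mu(u) < \mu(v)$), that $v$ knows $s(u)$ by clock round $L(v)$; hence at clock round $L(v)$ vertex $v$ has complete knowledge of the $P$-decisions of all its parents in the original orientation $\mu$, and so by the defining property of the \textbf{O-LOCAL} class it can compute a correct value $s(v)$ using only internal computation. Since labels are distinct, the assignment of rounds $L(\cdot)$ gives each vertex a well-defined round in which to decide, and decisions propagate consistently along $\phi$; an induction on the label value $L(v)$ shows the whole computation is well-defined and correct.

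Next I would bound the awake complexity. A vertex $v$ is awake precisely in the clock round $L(v)$ and in the clock rounds $L(u)$ for each parent $u$ of $v$ in $\phi$. By construction $v$'s parents form an independent set among the neighbors of $v$ (the greedy process adds a neighbor $u$ as a parent, then repeats on the remaining lower-labeled neighbors, so two chosen parents are never adjacent — this should be spelled out, since $M(v)$ is described as an independent ruling set of the low-labeled neighborhood). Therefore $v$ has at most $K$ parents, where $K$ is the neighborhood independence of $G$. Counting the round $L(v)$ itself, $v$ is awake in at most $K+1$ clock rounds, which is the claimed awake complexity.

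The main obstacle — and the only place requiring real care — is the termination and correctness of the greedy parent-selection together with the propagation argument, but this is exactly what the lemma establishes: one must be sure that the oriented path $u, w_i, w_{i-1}, \dots, v$ used there really delivers $s(u)$ to $v$ by round $L(v)$, which follows because each $w_j$ on that path is awake in the round equal to the label of its own parent on the path and $L(w_i) < L(w_{i-1}) < \cdots < L(v)$, so the messages arrive in increasing-round order. Given the lemma, the theorem is essentially immediate: correctness from the \textbf{O-LOCAL} definition plus the lemma, and the $K+1$ bound from the independence of each vertex's parent set in $\phi$. I would close by noting that this is optimal up to the additive constant, since already a vertex together with one parent forces awake time at least $2$, matching $K+1$ for $K=1$.
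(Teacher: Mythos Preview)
Your proposal is correct and follows essentially the same route as the paper: the paper derives the theorem directly from the preceding lemma (which supplies correctness) together with the observation that the parent set $M(v)$ in $\phi$ is independent and hence has size at most $K$, giving the $K+1$ awake bound. Your write-up is somewhat more detailed (the explicit induction on labels, the remark on optimality), but the argument is the same.
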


One now might wonder where this initial orientation comes from. One can use the algorithm and feed it with an orientation as an input as long as that orientation is acyclic. We use any of the algorithms given in Section \ref{sec:tradeOffs} to achieve this initial orientation with different awake complexities as well as different clock round complexities. In this section, though, we focus on the awake time complexity and prefer to neglect the number of clock rounds. And so we only need an acyclic orientation which one can achieve in a single awake round for all vertices by orienting the edges towards the vertex with the greater ID. This immediately gives us the following result and the corollary which follows. 

\begin{thm}  \label{thm:main}
Any {\bf O-LOCAL} problem can be solved deterministically within $K+1$ awake rounds where $K$ is the maximal neighborhood independence of the input graph $G$.
\end{thm}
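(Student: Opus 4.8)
The plan is to combine the constant-round orientation construction with the already-established Theorem \ref{thm:givenOrient}. First I would observe that Theorem \ref{thm:givenOrient} reduces the present claim to exhibiting \emph{any} acyclic orientation of $G$, since given such an orientation the algorithm of the previous section solves any {\bf O-LOCAL} problem in $K+1$ awake rounds regardless of which acyclic orientation we started from. So the only thing left to supply is an acyclic orientation computable in $O(1)$ awake time.

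Next I would describe the orientation itself: wake up every vertex for a single round, have each vertex learn the IDs of its neighbors, and orient every edge $\{u,v\}$ from the endpoint of smaller ID to the endpoint of larger ID. I would then verify acyclicity: along any directed path the ID strictly increases, so no directed cycle can exist. This costs exactly one awake round for every vertex, hence $O(1)$ awake time, which is dominated by (indeed, far below) the $K+1$ bound and does not affect the stated complexity.

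Finally I would invoke Theorem \ref{thm:givenOrient} with this orientation as input: the partial orientation $\phi$ built from it gives every vertex at most $K$ parents, and by the lemma each vertex $v$, at clock round $L(v)$, knows the $P$-solutions of all its $\mu$-parents and can therefore compute $s(v)$. Adding the one preprocessing round to the $K+1$ rounds of the solving phase still gives $O(K)$, and with a slightly more careful accounting (the preprocessing round can be folded into the first awake round of the solving phase, or simply absorbed) the bound $K+1$ is retained as stated.

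I do not anticipate a genuine obstacle here: the statement is essentially Theorem \ref{thm:givenOrient} with the hypothesis discharged, so the only mild subtlety is bookkeeping — making sure the cost of obtaining the orientation really is additive and negligible against $K+1$, and that the labels $L(\cdot)$ used by the solving algorithm are consistent with the ID-based orientation (they are, since the orientation is defined directly from the IDs). The corollary then follows immediately.
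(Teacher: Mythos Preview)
Your proposal is correct and matches the paper's own argument essentially line for line: the paper obtains the acyclic orientation in a single awake round by orienting each edge toward the endpoint with the larger ID, and then invokes Theorem~\ref{thm:givenOrient}. Your observation that the one preprocessing round can be folded into the first awake round of the solving phase is exactly how the $K+1$ bound is preserved, since that algorithm already begins with a single round in which every vertex collects its $1$-hop neighborhood.
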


\begin{col}
Any {\bf O-LOCAL} problem can be solved within constant deterministic awake time in graphs with bounded neighborhood independence.
\end{col}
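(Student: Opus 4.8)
The plan is to combine Theorem~\ref{thm:main} with the observation that bounded neighborhood independence means $K = O(1)$. Concretely, $K$ appears in the bound of Theorem~\ref{thm:main} as an additive term $K+1$ in the awake round count, so once we fix a graph family in which $K$ is bounded by an absolute constant, the quantity $K+1$ is itself a constant, and the whole awake complexity collapses to $O(1)$.

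First I would recall the definition: a graph $G$ has bounded neighborhood independence if its neighborhood independence $K$ — the maximum over all vertices $v$ of the size of a largest independent set contained in the $1$-hop neighborhood of $v$ — is at most some constant that does not depend on $n$ or $\Delta$. Classical examples include line graphs (where $K \le 2$), claw-free graphs ($K \le 2$), unit disk graphs, and more generally any graph family with a fixed forbidden induced bipartite subgraph of the appropriate type. For all such families there is a uniform constant $c$ with $K \le c$.

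Next I would invoke Theorem~\ref{thm:main} directly: on \emph{any} input graph, and hence in particular on $G$, any {\bf O-LOCAL} problem can be solved deterministically within $K+1$ awake rounds, where the required acyclic orientation is obtained for free in a single awake round by orienting each edge toward its higher-ID endpoint. Substituting $K \le c$ yields an awake complexity of at most $c+1 = O(1)$. Since the problem $P$ was an arbitrary member of the {\bf O-LOCAL} class and the graph an arbitrary member of the bounded-neighborhood-independence family, this establishes the corollary.

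There is essentially no obstacle here — the corollary is a one-line specialization of Theorem~\ref{thm:main}. The only point worth stating carefully is that the constant in the $O(1)$ depends on the neighborhood-independence bound $c$ of the family (and is, up to the additive one, exactly that bound), so the claim should be read as: for every fixed $c$, on the family of graphs with neighborhood independence at most $c$, every {\bf O-LOCAL} problem admits a deterministic sleeping-model algorithm with awake complexity $c+1 = O(1)$. The real content was already done in proving the lemma and Theorem~\ref{thm:givenOrient}/Theorem~\ref{thm:main}; the corollary merely records the consequence for the natural graph classes of interest.
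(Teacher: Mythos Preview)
Your proposal is correct and matches the paper's approach exactly: the corollary is stated immediately after Theorem~\ref{thm:main} with no separate proof, since it is precisely the specialization $K = O(1)$ that you describe. Your write-up is, if anything, more explicit than the paper, which treats the corollary as self-evident from the $K+1$ bound.
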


\hfill
\hfill

\noindent {\bf A Remark on the Trade-Off Between Awake Time Complexity and Number of Clock Rounds Complexity:} \newline
Our algorithm is not dependent on the length of the given orientation in terms of the awake time complexity. But it is clear that the length of the orientation effects the communication clock rounds complexity. For an orientation of length $\ell$ we need $O(\ell)$ clock rounds. The orientation we use above can be of length $O(n)$. But if one is willing to trade off $O(\log^*n)$ awake time there are coloring algorithms that will guarantee a shorter orientation such as the algorithm of Linial \cite{L86} for $O(\Delta^2)$-vertex-coloring. This shows another strength of our algorithm as it can work with any algorithm for orienting the edges of the input graph. 

\begin{col}
Any {\bf O-LOCAL} problem can be solved within $O(\log^*n)$ awake time  and $O(\log^*n)$ clock rounds in graphs with bounded neighborhood independence.
\end{col}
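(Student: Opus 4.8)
The plan is to obtain the corollary as a black-box combination of a fast initial coloring with the algorithm of Theorem~\ref{thm:givenOrient}. The point to exploit, already emphasised in the remark after Theorem~\ref{thm:main}, is that that algorithm takes only an acyclic orientation as input, that its awake complexity is exactly $K+1$ \emph{regardless} of the length $\ell$ of the orientation, and that only its clock complexity, $O(\ell)$, depends on $\ell$. So the whole statement reduces to producing, within $O(\log^* n)$ awake rounds, an acyclic orientation of $G$ whose length is as small as the available coloring tools permit.

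First I would run the coloring algorithm of Linial~\cite{L86} to obtain an $O(\Delta^2)$-vertex-coloring of $G$. This takes $O(\log^* n)$ rounds of the $\mathcal{LOCAL}$ model; since that algorithm never puts a vertex to sleep, in the sleeping model it costs $O(\log^* n)$ awake rounds and $O(\log^* n)$ clock rounds. Next I would orient every edge of $G$ from its lower-colored endpoint toward its higher-colored endpoint. Properness of the coloring makes this orientation acyclic --- every directed path is strictly color-increasing --- and bounds its length by the number of colors, so here $\ell = O(\Delta^2)$. (Any coloring subroutine could be used in place of Linial's; this one is what yields the $O(\log^* n)$ awake overhead.)

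Finally I would feed this orientation to the algorithm of Theorem~\ref{thm:givenOrient}: any {\bf O-LOCAL} problem is then solved within a further $K+1$ awake rounds and, by the remark after Theorem~\ref{thm:main}, $O(\ell)$ clock rounds. Adding the two stages, the awake complexity is $O(\log^* n) + (K+1)$, which equals $O(\log^* n)$ because $G$ has bounded neighborhood independence and so $K = O(1)$; the clock complexity is $O(\log^* n)$ for the coloring stage plus the $O(\ell)$ rounds of the solving phase. The one step to check with care is exactly the guarantee of Theorem~\ref{thm:givenOrient}: that supplying a potentially long orientation (here of length $\Theta(\Delta^2)$) does not inflate the awake cost beyond $K+1$ --- this decoupling of the awake cost from the orientation length is what pins the awake complexity at $O(\log^* n)$ and is the only non-bookkeeping ingredient. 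One should also confirm, routinely, that Linial's procedure incurs no hidden wake/sleep overhead when transplanted to the sleeping model, which it does not since it keeps every vertex awake throughout.
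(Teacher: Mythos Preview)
Your approach is exactly the one the paper sketches in the remark preceding the corollary: run Linial's $O(\Delta^{2})$-coloring to obtain an acyclic orientation, then invoke Theorem~\ref{thm:givenOrient}. The awake-complexity half of your argument is correct and matches the paper: $O(\log^{*}n)$ for Linial plus $K+1=O(1)$ for the solving phase gives $O(\log^{*}n)$ awake rounds.

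The gap is in the clock-round half. You yourself write that the solving phase costs $O(\ell)$ clock rounds with $\ell=O(\Delta^{2})$, and then stop at ``$O(\log^{*}n)$ for the coloring stage plus the $O(\ell)$ rounds of the solving phase'' without ever arguing that $O(\ell)=O(\log^{*}n)$. That reduction does not follow: bounded neighborhood independence does \emph{not} bound $\Delta$ (the complete graph $K_{n}$ has neighborhood independence $1$ yet $\Delta=n-1$), so the bound your own accounting yields is $O(\Delta^{2}+\log^{*}n)$ clock rounds, not $O(\log^{*}n)$. To actually reach the stated $O(\log^{*}n)$ clock bound you would need an additional idea---either a different, shorter orientation, or a rescheduling of the solving phase that does not pay for the full length of~$\mu$---and neither appears in your write-up. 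The paper's remark is equally terse on this point, so if after checking you conclude that the intended clock bound is really $O(\Delta^{2}+\log^{*}n)$, say so explicitly rather than leaving the $O(\ell)$ term dangling.
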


\section{Dynamic Distributed {\bf O-LOCAL} Problems}  \label{sec:dynamicOLOCAL}

We devise a deterministic dynamic algorithm in the sleeping setting for maintaining a solution for a given {\bf O-LOCAL} problem $P$. This scheme is later improved significantly but lays the basis for understanding how to use coloring algorithms as dynamic sleeping algorithms for {\bf O-LOCAL} problems. The preparation stage is the algorithm in \cite{BM21} for solving {\bf O-LOCAL} problems, denoted $A$. We do this only at the preparation stage. Given a problem $P$ as defined above one can solve $P$ in $O(\log \Delta)$ awake time with $O(\Delta)$ number of clock rounds. \\
Let $t$ be an integer representing the number of changes between executions of the update algorithm. That is, we advance in phases. In each phase the vertices of the graph are asleep for $t$ clock rounds and wake up after to check if changes were made to their adjacent edges. Without loss of generality, we allow for a single change in each of the $t$ clock rounds in the change state. This includes both changes to edges and vertices. We start with the assumption the problem $P$ was solved on $G$ prior to the current changes. Then we run an update algorithm during which there can be no changes to the graph. Our algorithm recolors the sub-graph which was changed properly and uses $A$ to maintain the solution for the problem $P$ on the new version of $G$ thus keeping the assumption we started from for the next batch of changes. \\
Let $S$ be the subset of vertices whose adjacent edges changed and also including new vertices. Let $k = \min(t, \Delta)$. Notice that $|S| = O(t)$. We set the color of the vertices in $S$ to 0 (no color). We color $S$ using $O(k^2)$ colors in $O(\log^*t)$ time. We then awake the entire graph for 1 round so that vertices in $S$ can collect the colors of vertices in $N(S)$, that is the information from the vertices in $G$ of distance at most 1 from vertices in $S$, as well as their yet legal choices in regard to problem $P$. We then use the algorithm $A$ to reduce the number of colors in $S$ to $k+1$. We now again use the algorithm $A$ to Solve $P$ within $S$ when the orientation we use is one where vertices outside of $S$ are directed into $S$ and inside $S$ the orientation continues according to the coloring.

\begin{lem} \label{lem:partialS}
The new orientation allows us to solve $P$ legally on $G$.
\end{lem}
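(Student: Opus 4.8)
The plan is to show that the partial orientation we construct on $G$ after a batch of changes is (i) acyclic, so that the algorithm $A$ can legally use it, and (ii) faithful to the original solution outside $S$, so that the decisions of vertices in $N(S) \setminus S$ need not be recomputed. First I would argue acyclicity. Edges incident to $S$ from the outside are all oriented \emph{into} $S$, and inside $S$ the orientation follows a proper coloring (using $k+1$ colors), hence points from higher to lower color and is acyclic within $S$. A cycle in the whole orientation would have to leave $S$ at some vertex, but no edge is oriented out of $S$ to a vertex of $V \setminus S$; so any cycle is confined to $S$, contradicting the acyclicity of the color-induced orientation there. Hence the combined orientation is acyclic and $A$ is applicable.

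Next I would verify that the pre-existing solution on $V \setminus S$ is still a legal partial solution under the new orientation. The key observation is that a vertex $w \notin S$ has exactly the same set of parents as before: none of its incident edges changed (by definition of $S$, only vertices in $S$ had incident-edge changes), and all edges between $w$ and $S$ are oriented from $w$ into $S$, so $w$ gains no new parents. By the defining property of the {\bf O-LOCAL} class, $w$'s decision depends only on the decisions of its parents, all of which lie in $V \setminus S$ and are unchanged; therefore $w$'s old decision $s(w)$ remains consistent. When we then run $A$ on $S$, each $v \in S$ awaits all of its parents — which are either vertices of $V\setminus S$ (whose valid, unchanged decisions $v$ collected in the one all-awake round) or vertices of $S$ with smaller color (resolved earlier in $A$'s execution on $S$) — and computes $s(v)$ by the {\bf O-LOCAL} rule. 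Composing, every vertex of $G$ holds a decision computed from the decisions of its parents, so the resulting assignment is a legal solution of $P$ on the updated $G$.

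I expect the main obstacle to be the faithfulness argument rather than acyclicity: one must be careful that ``$S$'' really captures every vertex whose parent set could change. A vertex $w$ outside $S$ that is merely a \emph{neighbor} of $S$ is fine only because the cross edges are oriented into $S$, so $w$'s outgoing edges (its parents) are untouched — this is exactly why the construction orients $V\setminus S \to S$. A subtle point to nail down is newly inserted vertices and newly inserted edges between two old vertices: the former are placed in $S$ by fiat, and the latter force \emph{both} endpoints into $S$, so again no vertex outside $S$ sees a changed incident edge. Once this bookkeeping is in place, the conclusion follows from the definition of {\bf O-LOCAL} together with Lemma~\ref{lem:partialS}'s acyclicity check and the performance property of $A$.
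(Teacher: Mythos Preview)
Your argument is correct and follows the same line as the paper's brief proof: the solution on $V\setminus S$ remains legal on that induced subgraph, and running $A$ on $S$ with the orientation (outside $\to S$, then by color within $S$) extends it legally to all of $G$; your explicit acyclicity check is a welcome addition that the paper omits. One small imprecision worth fixing: a vertex $w\notin S$ may \emph{lose} a former parent that now lies in $S$, so its parent set is not literally ``exactly the same as before''---but this is harmless, since legality among $V\setminus S$ is already secured and the $S$-endpoint of any cross edge will respect $w$'s fixed decision when it computes its own.
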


\begin{proof}
In the subgraph induced on the vertex subset $V/S$ the problem $P$ is already solved and legal. One can now regard the process of solving $P$ on $S$ as if we activated $A$ on $G$ where vertices in $V/S$ take priority in the orientation. But these vertices do not awake in our scheme besides for a single round to have their neighbors in $S$ know of their choices. From now solving $P$, since it is an ${O-LOCAL}$ problem, is a manner of progressing on the orientation inside $S$ which we are given by the coloring of $S$.
\end{proof}

\noindent Note that although overall in $G$ there might now be more colors than $(\Delta+1)$ we do not use all these colors in our scheme as only vertices in $S$ need recalculation for $P$. We do though preserve the property that $P$ is solved on $G$ and that is all we require for Lemma \ref{lem:partialS}.

\begin{thm}
Let $t$ be the number of edge additions that occur between executions of an update algorithm. Let $\alpha = \min(\Delta, t)$ and let $\beta = \min(n, t)$ where $\Delta$ is the current maximum degree of the graph. Then there is a deterministic dynamic sleeping algorithm for solving any problem in the class of {\bf O-LOCAL} problems with awake update-time $O(\log \alpha + \log^*\beta)$. The update requires $O(\alpha^2 + \log^* \beta)$ clock rounds.
\end{thm}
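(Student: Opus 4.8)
The plan is to verify each phase of the update algorithm described in the paragraph preceding the statement, bounding both its awake time and its clock round count, and then invoke Lemma~\ref{lem:partialS} to conclude correctness. First I would fix notation: let $S$ be the set of vertices whose incident edges changed together with any newly inserted vertices, so that $|S| = O(t)$, and set $\alpha = \min(\Delta, t)$, $\beta = \min(n, t)$. Observe that the subgraph induced on $S$ has maximum degree at most $\alpha$: any vertex in $S$ either has degree at most $\Delta$ in $G$, or, if its degree changed, at most $t$ edges were touched in the update window, so inside $S$ it sees at most $\min(\Delta,t) = \alpha$ neighbours. This degree bound is what lets every color-manipulation step below run with parameters in terms of $\alpha$ rather than $\Delta$, and the size bound $|S| = O(t)$ gives the $\log^*\beta$ (rather than $\log^* n$) terms.

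Next I would walk through the four operations in order. (i) Applying Linial's algorithm~\cite{L86} to the subgraph on $S$ produces an $O(\alpha^2)$-coloring in $O(\log^* |S|) = O(\log^* \beta)$ awake rounds and the same number of clock rounds. (ii) One global awake round lets each $v \in S$ collect the colors and current $P$-decisions of its neighbours in $G$; this costs $1$ awake round and $1$ clock round. (iii) Feeding the $O(\alpha^2)$-coloring into algorithm $A$ via its performance property with $d = O(\alpha^2)$ reduces the palette on $S$ to $\alpha+1$ colors in $O(\log d) = O(\log \alpha)$ awake rounds and $O(d) = O(\alpha^2)$ clock rounds. (iv) Finally, running $A$ once more on the orientation that points all edges from $V\setminus S$ into $S$ and orders $S$ internally by the $(\alpha+1)$-coloring solves $P$ on $S$; since the relevant orientation length is $O(\alpha)$, this again costs $O(\log \alpha)$ awake and $O(\alpha)$ clock rounds. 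Here I should note that the vertices of $V\setminus S$ contribute to the orientation but are never awoken beyond the single round of step (ii) — their $P$-values are already fixed and were transmitted there — so they add nothing to the awake count.

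Summing the four phases gives awake time $O(\log^*\beta) + O(1) + O(\log\alpha) + O(\log\alpha) = O(\log\alpha + \log^*\beta)$ and clock rounds $O(\log^*\beta) + O(1) + O(\alpha^2) + O(\alpha) = O(\alpha^2 + \log^*\beta)$, matching the claimed bounds. Correctness is exactly Lemma~\ref{lem:partialS}: on $V\setminus S$ the solution is untouched and still legal, and the new orientation — outside $S$ directed into $S$, inside $S$ following the fresh coloring — is acyclic, so running $A$ on it legally extends the $P$-solution to all of $G$, re-establishing the invariant for the next batch of updates.

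The main obstacle, and the step I would be most careful about, is the degree-bound argument justifying that every phase scales with $\alpha = \min(\Delta,t)$ rather than with $\Delta$: a vertex in $S$ can in principle have large degree in $G$, so one must argue that the coloring and reduction steps restricted to the induced subgraph on $S$ only ever need to distinguish $\alpha$-sized neighbourhoods, and that step (ii)'s single global round genuinely suffices to hand each $v\in S$ all the external information it will ever need (namely the colors and $P$-values of $N(S)$) so that no vertex outside $S$ must wake again. The rest is bookkeeping over the performance property of $A$.
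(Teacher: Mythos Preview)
Your proposal is correct and follows essentially the same approach as the paper: the paper does not give a separate formal proof of this theorem but relies on the algorithm description immediately preceding it together with Lemma~\ref{lem:partialS}, and your write-up is a faithful, more detailed accounting of exactly those steps (Linial on $S$, one global round, two invocations of $A$). Your care about the degree bound inside $S$ is appropriate; the cleanest justification is simply that $|S|=O(t)$ forces the induced-subgraph degree to be $O(t)$, combined with the trivial bound $\Delta$.
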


Using the results from Table \ref{tab:OLOCAL}, each as a dynamic algorithm in the same manner presented in this section, we can achieve three results each expressed in terms of the changes made to a graph and using the algorithms in question as the coloring algorithm for the subset $S$. The running times can be concluded easily in terms of $\alpha, \beta$. One simply sets $\Delta = \alpha$ and $n = \beta$ in Table \ref{tab:OLOCAL} to achieve the complexities of the dynamic variations. \\

\section{Dynamic Algorithm for Decidable Problems}

In \cite{BM21} Barenboim and Maimon offered an algorithm for building a Distributed Layered Tree (DLT) in $O(\log n \log^*n)$ awake time that requires $O(n \log n \log^*n)$ clock rounds. Loosely described, a DLT is an oriented tree in $G$ with a labeling to the vertices of the graph. The labeling is such that the label of each vertex is greater than that of its parent in the tree. A close inspection of this algorithm shows that these complexities derive from the initial $n$ vertices that are connected to a single DLT. We show here that the algorithm can be adapted to the dynamic setting. Assume that in the preparation stage we built a DLT for $G$. The following lemma shows the number of sub-DLTs created by the changes to the graph.

\begin{lem}
After $t$ changes to $G$, a DLT of $G$ is partitioned to at most $O(t \Delta)$ distinct sub DLTs.
\end{lem}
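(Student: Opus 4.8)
The plan is to regard the original DLT of $G$ not as a changing object but as a fixed set of oriented, labeled tree edges, and to watch how the $t$ elementary changes to $G$ split this forest into more and more connected pieces. Each connected component of this forest, after any number of changes, is again a DLT: the labeling inequality ``label of a vertex exceeds the label of its parent'' holds for every surviving parent--child pair, and the orientation is inherited, so each component is a legitimate sub-DLT (its unique vertex without a parent inside the piece serving as its root). Hence it suffices to bound the number of connected components of this forest after $t$ changes. A useful observation is that the count is \emph{monotone non-decreasing} under the changes we allow: an inserted edge is never one of the original tree edges, so inserting an edge cannot merge two sub-DLTs; consequently the number of pieces can only grow, and it is enough to bound the growth caused by a single change and sum.

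I would then go through the four kinds of elementary change, recalling that the model allows one change per clock round. An edge insertion adds a non-tree edge, so the forest is unchanged and the count grows by $0$. An edge deletion either removes a non-tree edge (count grows by $0$) or removes a tree edge, which splits one component into exactly two, so the count grows by $1$. A vertex insertion contributes a brand-new singleton component, so the count grows by $1$. Finally, a vertex deletion removes a vertex $v$ together with all tree edges incident to it; the tree-degree of $v$ is at most its graph degree and, since newly inserted edges are never tree edges, at most the maximum degree $\Delta$, so the component that contained $v$ breaks into at most $\Delta$ pieces while $v$ disappears, i.e.\ the count grows by at most $\Delta-1$. In every case the number of sub-DLTs increases by at most $\Delta - 1 = O(\Delta)$, and since we begin with a single DLT, after $t$ changes the DLT is partitioned into at most $1 + t(\Delta-1) = O(t\Delta)$ sub-DLTs, as claimed.

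The step that needs the most care is the vertex-deletion case together with the precise reading of $\Delta$. Edge insertions increase degrees in $G$, so one might worry that a vertex deleted late in the sequence could shatter its component into more than $\Delta$ pieces; this does not happen because the DLT is the fixed structure from the preparation stage, so the relevant quantity is the vertex's \emph{tree}-degree, which never increases over time and is bounded by the maximum degree (equivalently, one may read $\Delta$ as the largest degree occurring over the whole sequence). A secondary point to verify is exactly the claim made in the first paragraph, namely that every forest piece produced by a deletion really is a valid DLT: only vertices and edges were removed, so no surviving label comparison can be violated and each piece acquires a well-defined root. With these two points settled, the per-change bound $\Delta-1$ holds uniformly and summing over the $t$ changes yields the $O(t\Delta)$ bound.
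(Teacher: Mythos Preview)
Your proof is correct and follows essentially the same approach as the paper: a case analysis over the four elementary change types, with vertex deletion as the worst case contributing $O(\Delta)$ new pieces, summed over $t$ changes. You are in fact more careful than the paper on two points it glosses over---that each surviving component is still a valid DLT, and that the relevant degree bound is the \emph{tree}-degree in the fixed preparation-stage DLT rather than the current graph degree---so your version is a strict improvement in rigor.
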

\begin{proof}
We prove this by types of changes. Adding an edge or a vertex does not shatter the existing DLT of a connected component. Adding an edge, though, does have the potential to unite two connected components into one, thus requires us to unite 2 DLTs into one. Thus, addition creates at most 1 connection requirement between the existing DLTs. Removing an edge, if it one that is in the DLT, shatters the DLT into two, again requires us for at most 1 connection between two DLTs, amending the disconnection (if such amend is possible). Removing a vertex removes all $\Delta$ adjacent edges thus requiring at most $\Delta$ connections between $\Delta+1$ sub DLTs to re-unite into a single DLT (where possible). Therefore, the worse case is that we will have $t$ vertex removal and from this we conclude that we have at most $O(t \Delta)$ sub DLTs to connect into a single DLT that spans the input graph.
\end{proof}

\noindent Thus, we are required for $O(\log t + \log \Delta)$ connection phases using the same DLT algorithm from \cite{BM21}. Each phase requires coloring in $O(\log^* t + \log^*\Delta)$ awake time. The number of clock rounds is still $O(n)$ in each connection phase. 

\begin{thm}
Let $t$ be the number of changes between executions of an update algorithm. Let $b = \min(\max(t, \Delta), n)$. In the dynamic sleeping model there is a deterministic algorithm for any decidable problem with awake update-time of $O(\log b \log^*b)$. The update requires $O(n \log b \log^*b)$ clock rounds.
\end{thm}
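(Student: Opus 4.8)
The plan is to re-run the Distributed Layered Tree construction of \cite{BM21}, but started from the sub-DLTs that survive the update phase rather than from $n$ singletons. First I would recall how that construction works: it maintains a partition of $V(G)$ into layered trees and proceeds in \emph{merge phases}. In a phase one forms the \emph{tree graph}, whose vertices are the current trees and whose edges join two trees that are adjacent in $G$, colors it with $O(\log^*)$ colors by simulating Linial's algorithm \cite{L86} through the trees, merges every tree beneath a neighboring tree of larger color, and re-layers the labels of the moved vertices (the new root of a merged tree $T$ gets label $\ell(v)+1$ where $v$ is the vertex it was attached to, and the labels inside $T$ are shifted to keep them strictly increasing from root to leaf). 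Each phase reduces the number of trees by a constant factor, so from $m$ trees the process ends after $O(\log m)$ phases, and each simulated Linial round costs $O(n)$ clock rounds since a message may have to traverse trees of depth up to $O(n)$; the wake-up schedule of \cite{BM21} keeps the awake charge of a phase at $O(\log^* m)$.

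Next I would invoke the preceding lemma: after $t$ changes the DLT of $G$ is broken into $m = O(t\Delta)$ sub-DLTs, each of which is still a valid layered tree since the changes preserve the relative label order inside each piece, and since $G$ has only $n$ vertices also $m \le n$, so $m = O(\min(t\Delta,n))$. Treating each surviving sub-DLT as a single atom of the tree graph and applying the merge procedure above re-unites $G$ into one DLT. The point that I expect to be the main obstacle is verifying that the layered-tree invariant is preserved when the atoms are arbitrary sub-DLTs and not singletons: one must check that after the shift the labels inside a merged sub-DLT remain pairwise distinct and strictly increasing along every root-to-leaf path, exactly as in the static analysis, and that the relaying of the new labels and of the Linial messages can still be scheduled so that each vertex is awake only $O(\log^* m)$ times per phase, i.e.\ that the awake cost is charged to the $O(\log m)$ phases rather than to the $O(n)$ depth.

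It then remains to express the bounds in terms of $b=\min(\max(t,\Delta),n)$. If $\max(t,\Delta)\ge n$ then $b=n$ and $m=O(n)=O(b)$; if $\max(t,\Delta)<n$ then $b=\max(t,\Delta)$ and $m\le t\Delta\le b^2$; in either case $\log m=O(\log b)$ and $\log^* m=O(\log^* b)$. Hence there are $O(\log b)$ merge phases, each using $O(\log^* b)$ awake rounds for the coloring and $O(n\log^* b)$ clock rounds to relay the Linial simulation, which totals $O(\log b\log^* b)$ awake update-time and $O(n\log b\log^* b)$ clock rounds for reconnecting the DLT. With the reconnected DLT in hand, the decidable problem is then re-solved by the DLT-based algorithm of \cite{BM21}, whose cost is subsumed by that of the reconnection, giving the stated bounds.
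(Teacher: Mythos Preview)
Your proposal is correct and follows essentially the same approach as the paper: invoke the preceding lemma to bound the number of surviving sub-DLTs by $O(t\Delta)$, then rerun the merge-phase construction of \cite{BM21} starting from these pieces rather than from singletons. Your write-up is in fact more careful than the paper's one-paragraph argument, since you make explicit why $\log m=O(\log b)$ and $\log^* m=O(\log^* b)$ via the case split on $\max(t,\Delta)$ versus $n$, and you correctly account for the $O(n\log^* b)$ clock rounds per phase coming from simulating each Linial round through trees of depth $O(n)$.
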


\noindent {\bf Remark:} In the worst case where the number of changes is very large, i.e. $\Omega(n)$, our algorithm aligns with the algorithm at \cite{BM21}.

\bibliography{lipics-v2019-sample-article}

\end{document}